\documentclass[submission,copyright,creativecommons,sharealike,noncommercial]{eptcs}

\pdfoutput=1


\usepackage{mathtools} 
\usepackage{amstext} 
\usepackage{amssymb} 
\usepackage{bbold} 
\usepackage{amsthm} 
\usepackage{stmaryrd} 

\usepackage{relsize} 
\usepackage{microtype} 
\usepackage{csquotes} 
\usepackage{cancel}

\usepackage{hyperref} 
\usepackage[nocompress]{cite} 

\usepackage{graphicx} 
\usepackage[usenames,dvipsnames]{xcolor} 
\usepackage{stackengine}
\usepackage{tikz} 
\usepackage{circuitikz} 
\usetikzlibrary{
	arrows,
	shapes,
	decorations,
	intersections,
	backgrounds,
	positioning,
	circuits.ee.IEC
	}



	
		\newcounter{theorem_c} 

		\theoremstyle{plain}

		\newtheorem{proposition}[theorem_c]{Proposition}
		
		\newtheoremstyle{exampstyle}
		  {2mm} 
		  {2mm} 
		  {\itshape} 
		  {} 
		  {\bfseries} 
		  {.} 
		  {.5em} 
		  {} 

		\theoremstyle{exampstyle}
		\newtheorem{definition}[theorem_c]{Definition}

	\newcommand{\inlineQuote}[1]{\textquotedblleft #1\textquotedblright} 

	\newcommand{\reals}{\mathbb{R}} 
	\newcommand{\complexs}{\mathbb{C}} 
	\newcommand{\integersMod}[1]{\mathbb{Z}_{#1}} 
	




		\newcommand{\ket}[1]{\vert #1 \rangle} 
		\newcommand{\bra}[1]{\langle #1 \vert} 
		\newcommand{\braket}[2]{\langle #1 \vert #2 \rangle} 
		
		\newcommand{\decohSym}{\operatorname{dec}} 
		\newcommand{\decoh}[1]{\decohSym_{#1}} 





		\newcommand{\isom}{\cong} 
		\newcommand{\id}[1]{id_{#1}} 




		\newcommand{\RMatCategory}[1]{#1\operatorname{-Mat}} 
		\newcommand{\fHilbCategory}{\operatorname{fHilb}} 
		\newcommand{\fdHilbCategory}{\fHilbCategory} 


		\newcommand{\CPMCategory}[1]{\operatorname{CPM}[#1]} 

		\newcommand{\KaroubiEnvelope}[1]{\operatorname{Split}\left[#1\right]} 
	\newcommand{\Xcolour}{Red}
	
	\newcommand{\Zcolour}{YellowGreen}

	\newcommand{\Xaltcolour}{Purple}
	
	
	\newcommand{\Zaltcolour}{Cyan}
	

	\newcommand{\Dcolour}{black!80}


	\newcommand{\Xbwcolour}{black!80}
	\newcommand{\!\hbox{\input{symbols/DrightcounitLawSym.tex}}\!\!}{\!\hbox{\input{symbols/DrightcounitLawSym.tex}}\!\!} 
	\newcommand{\!\hbox{\input{symbols/DspecialtyLawSym.tex}}\!\!}{\!\hbox{\input{symbols/DspecialtyLawSym.tex}}\!\!} 

	\newcommand{\Zbwcolour}{white}

	\newcommand{\Ybwcolour}{black!15}

	\newcommand{\Wbwcolour}{black!50}

	\newcommand{\trace}[1]{\hbox{\input{symbols/traceSym.tex}}\!_{#1}} 

	\newcommand{\hbox{\input{symbols/mapSym.tex}}\!\!}{\hbox{\input{symbols/mapSym.tex}}\!\!} 
	\newcommand{\hbox{\input{symbols/mapconjSym.tex}}\!\!}{\hbox{\input{symbols/mapconjSym.tex}}\!\!} 


	\tikzset{
	  rectangle with rounded corners north west/.initial=4pt,
	  rectangle with rounded corners south west/.initial=4pt,
	  rectangle with rounded corners north east/.initial=4pt,
	  rectangle with rounded corners south east/.initial=4pt,
	}
	\makeatletter
	\pgfdeclareshape{rectangle with rounded corners}{
	  \inheritsavedanchors[from=rectangle] 
	  \inheritanchorborder[from=rectangle]
	  \inheritanchor[from=rectangle]{center}
	  \inheritanchor[from=rectangle]{north}
	  \inheritanchor[from=rectangle]{south}
	  \inheritanchor[from=rectangle]{west}
	  \inheritanchor[from=rectangle]{east}
	  \inheritanchor[from=rectangle]{north east}
	  \inheritanchor[from=rectangle]{south east}
	  \inheritanchor[from=rectangle]{north west}
	  \inheritanchor[from=rectangle]{south west}
	  \backgroundpath{
	    \southwest \pgf@xa=\pgf@x \pgf@ya=\pgf@y
	    \northeast \pgf@xb=\pgf@x \pgf@yb=\pgf@y
	    \pgfkeysgetvalue{/tikz/rectangle with rounded corners north west}{\pgf@rectc}
	    \pgfsetcornersarced{\pgfpoint{\pgf@rectc}{\pgf@rectc}}
	    \pgfpathmoveto{\pgfpoint{\pgf@xa}{\pgf@ya}}
	    \pgfpathlineto{\pgfpoint{\pgf@xa}{\pgf@yb}}
	    \pgfkeysgetvalue{/tikz/rectangle with rounded corners north east}{\pgf@rectc}
	    \pgfsetcornersarced{\pgfpoint{\pgf@rectc}{\pgf@rectc}}
	    \pgfpathlineto{\pgfpoint{\pgf@xb}{\pgf@yb}}
	    \pgfkeysgetvalue{/tikz/rectangle with rounded corners south east}{\pgf@rectc}
	    \pgfsetcornersarced{\pgfpoint{\pgf@rectc}{\pgf@rectc}}
	    \pgfpathlineto{\pgfpoint{\pgf@xb}{\pgf@ya}}
	    \pgfkeysgetvalue{/tikz/rectangle with rounded corners south west}{\pgf@rectc}
	    \pgfsetcornersarced{\pgfpoint{\pgf@rectc}{\pgf@rectc}}
	    \pgfpathclose
	 }
	}
	\makeatother

	\tikzset{->-/.style={decoration={markings,mark=at position #1 with {\arrow{>}}},postaction={decorate}}}
	\tikzset{-<-/.style={decoration={markings,mark=at position #1 with {\arrow{<}}},postaction={decorate}}}

	\tikzstyle{every picture}=[baseline=-0.25em,scale=0.5]
	\pgfdeclarelayer{edgelayer}
	\pgfdeclarelayer{nodelayer}
	\pgfsetlayers{edgelayer,nodelayer,main}

	\tikzstyle{box} = [draw,shape=rectangle,inner sep=2pt,minimum height=6mm,minimum width=6mm,fill=white] 
	\tikzstyle{boxlarge} = [draw,shape=rectangle,inner sep=2pt,minimum height=1.5cm,minimum width=8mm,fill=white] 
	\tikzstyle{boxLarge} = [draw,shape=rectangle,inner sep=2pt,minimum height=2cm,minimum width=10mm,fill=white] 
	\tikzstyle{boxsmall} = [draw,shape=rectangle,inner sep=2pt,minimum height=3mm,minimum width=3mm,fill=white] 
	\tikzstyle{dot} = [inner sep=0mm,minimum width=3mm,minimum height=3mm,draw,shape=circle,text depth=-0.1mm]
	\tikzstyle{Zbwdot} = [dot, fill=\Zbwcolour]
	\tikzstyle{Xbwdot} = [dot, fill=\Xbwcolour]
	\tikzstyle{Ybwdot} = [dot, fill=\Ybwcolour]
	\tikzstyle{Wbwdot} = [dot, fill=\Wbwcolour]
	\tikzstyle{antipode} = [boxsmall] 

	\tikzstyle{state} = [draw, rectangle with rounded corners,
	  rectangle with rounded corners north west=8pt,
	  rectangle with rounded corners south west=8pt,
	  rectangle with rounded corners north east=0pt,
	  rectangle with rounded corners south east=0pt,
	,inner sep=2pt,minimum height=6mm,minimum width=6mm,fill=white]
	\tikzstyle{statelarge} = [draw, rectangle with rounded corners,
	  rectangle with rounded corners north west=8pt,
	  rectangle with rounded corners south west=8pt,
	  rectangle with rounded corners north east=0pt,
	  rectangle with rounded corners south east=0pt,
	,inner sep=2pt,minimum height=1.5cm,minimum width=8mm,fill=white]
	\tikzstyle{stateLarge} = [draw, rectangle with rounded corners,
	  rectangle with rounded corners north west=8pt,
	  rectangle with rounded corners south west=8pt,
	  rectangle with rounded corners north east=0pt,
	  rectangle with rounded corners south east=0pt,
	,inner sep=2pt,minimum height=2cm,minimum width=8mm,fill=white]
	\tikzstyle{effect} = [draw, rectangle with rounded corners,
	  rectangle with rounded corners north west=0pt,
	  rectangle with rounded corners south west=0pt,
	  rectangle with rounded corners north east=8pt,
	  rectangle with rounded corners south east=8pt,
	,inner sep=2pt,minimum height=6mm,minimum width=6mm,fill=white]
	\tikzstyle{scalar}=[diamond,draw,inner sep=1pt,font=\small,fill=white]

	\tikzstyle{cdnode}=[fill=white]
	\tikzstyle{labelnode}=[fill=white]
	\tikzstyle{tightlabelnode}=[fill=white,inner sep = 0.1mm]
	\tikzstyle{none}=[inner sep=0pt]
	\tikzstyle{whiteline}=[-, line width=4pt, draw=white]

	\tikzstyle{trace}=[circuit ee IEC,thick,ground,scale=2.5]
	\tikzstyle{cotrace}=[circuit ee IEC,thick,ground,rotate=180,scale=2.5]
	\tikzstyle{upground}=[circuit ee IEC,thick,ground,rotate=90,scale=2.5]
	\tikzstyle{downground}=[circuit ee IEC,thick,ground,rotate=-90,scale=2.5]

	\tikzstyle{doubled} = [line width=1.8pt] 
	
	\tikzstyle{empty diagram}=[draw=gray!40!white,dashed,shape=rectangle,minimum width=1cm,minimum height=1cm]


\definecolor{red}{RGB}{240,0,0}
\definecolor{green}{RGB}{36,255,36}
\definecolor{blue}{RGB}{50,122,195}
\definecolor{RoyalPurple}{RGB}{73,0,146}

\setcounter{tocdepth}{2}

\title{Density Hypercubes, Higher Order Interference \\ and Hyper-Decoherence: a Categorical Approach}
\author{
	Stefano Gogioso\\
	University of Oxford \\
	\texttt{stefano.gogioso@cs.ox.ac.uk}
	\and
	Carlo Maria Scandolo \\
	University of Oxford \\
	\texttt{carlomaria.scandolo@cs.ox.ac.uk}
}

\newcommand{\defi}[1]{\emph{#1}}

\newcommand{\Aut}[1]{\textnormal{Aut(}#1\textnormal{)}}

\newcommand{\DDCategorySym}{\textnormal{DD}}
\newcommand{\DDCategory}[1]{\DDCategorySym\textnormal{(}#1\textnormal{)}}
\newcommand{\DHCategory}[1]{\textnormal{DH(}#1\textnormal{)}}

\newcommand{\hypdecohSym}{\textnormal{hypdec}} 
\newcommand{\hypdecoh}[1]{\hypdecohSym_{#1}} 

\renewcommand{\fHilbCategory}{{\textnormal{fHilb}}}
\renewcommand{\RMatCategory}[1]{{#1\textnormal{-Mat}}}
\renewcommand{\CPMCategory}[1]{{\textnormal{CPM(}#1\textnormal{)}}}
\renewcommand{\KaroubiEnvelope}[1]{\textnormal{Split(}#1\textnormal{)}}

\renewcommand{\decohSym}{\textnormal{dec}} 
\renewcommand{\decoh}[1]{\decohSym_{#1}} 

\renewcommand{\Zbwcolour}{white}
\renewcommand{\Xbwcolour}{black!80}

\begin{document}

\maketitle

\begin{abstract}
	In this work, we use the recently introduced double-dilation construction by Zwart and Coecke to construct a new categorical probabilistic theory of density hypercubes. By considering multi-slit experiments, we show that the theory displays higher-order interference of order up to fourth. We also show that the theory possesses hyperdecoherence maps, which can be used to recover quantum theory in the Karoubi envelope. 
\end{abstract}

\section{Introduction} 
\label{section_introduction}

Quantum interference is often considered to be one of the fundamental features of quantum theory, responsible for quantum advantage in a number of computational tasks. However, there is a known limit to how much interference quantum theory can exhibit. Sorkin proposed a hierarchy of theories based on the maximum order of interference they exhibit \cite{Sorkin1,Sorkin2}, which is quantified by the maximum number of slits on which a theory shows an irreducible interference behaviour. Interference in quantum theory is limited to the second order: the interference pattern of two slits cannot be reduced to the pattern of single slits, but the interference pattern of three slits can be reduced to the pattern arising from pairs of slits and single slits. This limitation has been recently confirmed in various experiments \cite{sinha2010ruling,park2012three,sinha2015superposition,kauten2015obtaining,jin2017experimental}.

A natural question arises: Why is interference in Nature limited to the second order? Does the presence of higher-order interference create any paradoxical consequences in Nature that conflict with some of the principles we believe to be fundamental? Recent work has shown that higher-order interference---i.e. interference of order higher than the second---is forbidden \cite{HOP} in physical theories which admit a fundamental level of description where everything is pure and reversible \cite{TowardsThermo,Purity}. Further work has ruled out higher-order interference based on thermodynamic considerations \cite{Barnum-thermo,TowardsThermo}.

Other literature has instead focused on the analysis of specific feature that theories with higher-order interference would possess, e.g.\ whether they would provide any advantage in certain computational tasks \cite{Lee-Selby-interference,Control-reversible,Lee-Selby-Grover,Oracles}. It was also shown that theories having second-order interference and lacking interference of higher orders are relatively close to quantum theory \cite{Barnum-interference,Ududec-3slits,CozThesis,Niestegge}.

Unfortunately, one of the major shortcomings in the study of higher-order interference is the scarcity of concrete models displaying such post-quantum features, so that it has so far been very hard to look for specific examples of paradoxical or counter-intuitive consequences. Two models---density cubes \cite{Density-cubes} and quartic quantum theory \cite{Quartic-theory}---have been proposed in the past, but are not fully defined operational theories, e.g.\ because they do not deal with composite systems \cite{Lee-Selby-interference}. This limitation precludes them from being used to study all possible consequences of higher order interference, including potential violation of Tsirelson bound.

In this article, we provide the first complete construction of a full-fledged operational theory exhibiting interference up to the fourth order. Our construction is inspired by the double-dilation construction of \cite{double-mixing} 
and it is carried out in within the framework of categorical probabilistic theories \cite{gogioso2017categorical}. The resulting theory of `density hypercubes' has composite systems, exhibits higher-order interference and possesses hyper-decoherence maps \cite{Quartic-theory,Lee-Selby-interference,lee2017no}. Quantum theory, with its second-order interference, is an extension of classical theory: the latter can be recovered by decoherence, which eliminates the second-order interference effects. Similarly, the theory of density hypercubes, with its third- and fourth-order interference, is an extension of quantum theory: the latter can now be recovered by hyper-decoherence, which eliminates third- and fourth-order interference effects. 

The paper is organized as follows. 
In Section \ref{section_densityHypercubes}, we define the categorical probabilistic theory of density hypercubes using the double-dilation construction. 
In Section \ref{section_hyperDecoherence}, we define hyper-decoherence maps, and show that quantum theory is recovered in the Karoubi envelope. 
In Section \ref{section_higherOrderInterference}, we show that density hypercubes display interference of third- and fourth-order, but not of fifth-order and above. 
In Section \ref{section_conclusions}, finally, we discuss open questions and future lines of research. 
Proofs of all results can be found in the Appendix.
\vspace{12pt} 

\section{The Theory of Density Hypercubes}
\label{section_densityHypercubes}

\subsection{Construction of the theory}

In this section, we define the categorical probabilistic theory of \defi{density hypercubes}, using a recently introduced construction known as \defi{double dilation} \cite{double-mixing}. The construction is done in two steps: first we define the category \DDCategory{\fHilbCategory}, containing hyper-quantum systems and processes between them, and only in a second moment we introduce quantum and classical systems, using (hyper-)decoherence and working in the Karoubi envelope \KaroubiEnvelope{\DDCategory{\fHilbCategory}}.

The \defi{double-dilation category} \DDCategory{\fHilbCategory} is defined to be a symmetric monoidal subcategory of \CPMCategory{\fHilbCategory} with objects---the \defi{density hypercubes}---in the form $\DDCategory{H}:= \mathcal{H} \otimes \mathcal{H}$, where $H$ is a finite-dimensional Hilbert space and $\mathcal{H}:=H^\ast \otimes H$ is the corresponding doubled system in the CPM category. Even though \DDCategory{\fHilbCategory} is symmetric monoidal and has its own graphical calculus, in this work we will always use the graphical calculi of \CPMCategory{\fHilbCategory} and \fHilbCategory{} to talk about density hypercubes. When working in \CPMCategory{\fHilbCategory}, we will use solid black lines for morphisms and calligraphic letters (e.g. $\mathcal{H}$) for objects. When working in \fHilbCategory, we will use solid grey lines for morphisms and plain letters (e.g. $H$) for objects. 

The morphisms $\DDCategory{H} \rightarrow \DDCategory{K}$ in \DDCategory{\fHilbCategory} are the CP maps $\mathcal{H}\otimes \mathcal{H} \rightarrow \mathcal{K} \otimes\mathcal{K}$ taking the following form for a doubled CP map $F$, some auxiliary systems $\mathcal{E},\mathcal{G}$ and some special commutative $\dagger$-Frobenius algebra $\hbox{\input{symbols/ZdotSym.tex}}\!$ (henceforth known as a \defi{classical structure}) on $G$ in \fHilbCategory:
\begin{equation}
\scalebox{0.8}{$
	\input{pictures/doublyMixedCPmap.tikz}
$}
\end{equation}
In the diagram above, $F$ is a doubled CP map $\mathcal{H} \rightarrow \mathcal{G} \otimes \mathcal{K} \otimes \mathcal{E}$ in \CPMCategory{\fHilbCategory}---i.e. one in the form $F = f^\ast \otimes f$ for some $f:H \rightarrow G \otimes K \otimes E$ in \fHilbCategory---and we have used $\bar{F}$ to denote the CP map obtained by inverting the tensor product ordering of inputs and outputs of $f$ (for purely aesthetic reasons). We will always use upper-case letters (e.g. $F$) to denote doubled CP maps in \CPMCategory{\fHilbCategory}, lower-case letters to denote the corresponding linear maps in \fHilbCategory, and we will always write discarding maps explicitly.

Composition in \DDCategory{\fHilbCategory} is the same as composition of CP maps, while tensor product is only slightly adjusted to take into account the doubled format of our new morphisms:
\begin{equation}
\scalebox{0.5}{$
	\input{pictures/tensorProductDHmaps1.tikz}
	\hspace{3mm} \bigotimes \hspace{3mm}
	\input{pictures/tensorProductDHmaps2.tikz}
	\hspace{3mm} = \hspace{3mm}
	\input{pictures/tensorProductDHmaps3.tikz}
	$}
\end{equation}
Just as was the case for CP maps, maps of density hypercubes can all be obtained as composition of a ``doubled'' map and one or two ``discarding'' maps:
\begin{equation}
	\underbrace{\addstackgap[6pt]{$
		\scalebox{0.8}{$\input{pictures/DHpureDiscardingMaps1.tikz}$}
	$}}_{\text{doubled map}}
	\hspace{3cm}
	\underbrace{\addstackgap[6pt]{$
		\scalebox{0.8}{$\input{pictures/DHpureDiscardingMaps2.tikz}$}
	$}}_{\text{discarding maps}}	
\end{equation}
We refer to the discarding map obtained by doubling $\trace{\mathcal{E}}$ as the ``forest'' and to the discarding map obtained from the classical structure $\hbox{\input{symbols/ZdotSym.tex}}\!$ as the ``bridge''.  The scalars of \DDCategory{\fHilbCategory} are exactly the scalars $\reals^+$ of \CPMCategory{\fHilbCategory}, and hence the theory of density hypercubes is probabilistic. It is furthermore convex, because the following ``tree-on-a-bridge'' effects can be used to add-up maps of density hypercubes---analogously to the way ordinary discarding maps $\trace{\mathcal{H}}$ can be used to add-up CP maps in \CPMCategory{\fHilbCategory}---by expanding them in terms of the orthonormal bases $\ket{\psi_x}_{x \in X}$ associated \cite{coecke2013new} with the classical structures $\hbox{\input{symbols/ZbwdotSym.tex}}\!\!$:
\begin{equation}
	\scalebox{0.7}{$
		\input{pictures/DHclassicalDiscardingMaps2.tikz}
	$}
	\hspace{3mm} = \hspace{3mm}
	\scalebox{0.7}{$
		\input{pictures/DHquantumDiscardingMaps1.tikz}
	$}
	\hspace{3mm} = \hspace{3mm}
	\sum_{x \in X} \hspace{3mm}
	\scalebox{0.7}{$
		\input{pictures/DHclassicalDiscardingMaps3.tikz}
	$}
\end{equation}

\subsection{Component symmetries}

States in the theory \DDCategory{\fHilbCategory} take the form of fourth order tensors, an observation which prompted the choice of \inlineQuote{density hypercubes} as a name for the theory. If $(\ket{\psi_x})_{x \in X}$ is a choice of orthonormal basis for some finite-dimensional Hilbert space $H$, the states on $\DDCategory{H} = \mathcal{H} \otimes \mathcal{H}$ in \DDCategory{\fHilbCategory} can be expanded as follows in $\fHilbCategory$:
\begin{equation}
	\scalebox{0.8}{$\input{pictures/DHtensorForm1.tikz}$}
	\hspace{3mm} = \hspace{3mm}
	\begin{color}{gray}\sum_{x_{00},x_{01},x_{10},x_{11} \in X} \end{color} 
	\hspace{3mm}
	\scalebox{0.6}{$\input{pictures/DHtensorForm2.tikz}$}
\end{equation}
Recall that density matrices possess a $\integersMod{2}$ symmetry given by self-adjointness. This symmetry can be understood in terms of the following action $\tau: \integersMod{2} \rightarrow \Aut{\complexs}$ of $\integersMod{2}$ on the complex numbers:
\begin{equation}
	\begin{array}{rcrlcrcrl}
		\tau(0) &:=& z &\mapsto z &\hspace{2cm}& \tau(1) &:=& z &\mapsto z^\ast 
	\end{array}
\end{equation}
The components of a density matrix $\rho$ then satisfy the following equation, for every $a \in \integersMod{2}$ (trivial for $a=0$, self-adjoint for $a = 1$):
\begin{equation}
	\tau(a)( \rho_{\,x_0 \, x_1} ) = \rho_{x_{(0\oplus a)} \, x_{(1 \oplus a)}}
\end{equation}
Instead of a $\integersMod{2}$ symmetry, density hypercubes possess a $\integersMod{2} \times \integersMod{2}$ symmetry. This symmetry can be understood in terms of the following action $\tau: \integersMod{2} \times \integersMod{2} \rightarrow \Aut{\complexs}$ of $\integersMod{2} \times \integersMod{2}$ on the complex numbers:
\begin{equation}
	\begin{array}{rcrlcrcrl}
	  \tau(0,0) &:=& z &\mapsto z &\hspace{2cm} &\tau(0,1) &:=& z &\mapsto z^\ast \\
	  \tau(1,0) &:=& z &\mapsto z^\ast &\hspace{2cm} &\tau(1,1) &:=& z &\mapsto z 
	\end{array}
\end{equation}
The components of a density hypercube $\rho$ satisfy the following equation for every $(a,b) \in \integersMod{2} \times \integersMod{2}$, where by $\oplus$ we have denoted addition in $\integersMod{2}$:
\begin{equation}
	\tau(a,b)( \rho_{\,x_{(0,0)} \, x_{(0,1)} \, x_{(1,0)} \, x_{(1,1)}} ) = \rho_{\,x_{(0\oplus a,0\oplus b)} \, x_{(0\oplus a,1\oplus b)} \, x_{(1\oplus a,0\oplus b)} \, x_{(1\oplus a,1\oplus b)}}
\end{equation}
We see that the components are related by a trivial symmetry for $a = (0,0)$, by a self-adjoining symmetry for $a=(1,0)$ and $a=(0,1)$, and by a self-transposing symmetry in for $a=(1,1)$. An alternative way to look at this symmetry is observe that states of density hypercubes can all be expressed as certain sums of doubled states in the following form:
\begin{equation}
	\scalebox{0.8}{$\input{pictures/DHdoubledState.tikz}$}
\end{equation}
For these states, we have the usual self-conjugating $\integersMod{2}$ symmetry of density matrices $\Phi \otimes \overline{\Phi} \mapsto \Phi^\ast \otimes \overline{\Phi^\ast}$ as well as an independent self-transposing $\integersMod{2}$ symmetry $\Phi \otimes \overline{\Phi} \mapsto \overline{\Phi \otimes \overline{\Phi}}$, which taken together give the same $\integersMod{2} \times \integersMod{2}$ symmetry described above in terms of components.

In order to visualise the $\integersMod{2}\times\integersMod{2}$ symmetry action, we divide the components $\rho_{x_{00}x_{01}x_{10}x_{11}}$ of a $d$-dimensional density hypercube $\rho$ into 15 classes, depending on which indices $x_{00},x_{01},x_{10},x_{11}$ have same/distinct values chosen from the set $\{1,...,d\}$. We arrange the indices on a square: index $00$ is on the top left corner, $10$ acts as reflection about the vertical mid-line, $01$ acts as reflection about the horizontal mid-line and $11$ acts as 180\textsuperscript{o} rotation about the centre. We use colours as names for index values in $\{1,...,d\}$, with distinct colours denoting distinct values. 
\begin{equation}
	\underbrace{
		\fbox{\resizebox{!}{3mm}{\begin{tikzpicture}

	\begin{pgfonlayer}{nodelayer}
		\node[dot,fill=red,draw=red] (tl) at (-0.75,+0.75) {};
		\node[dot,fill=green,draw=green] (bl) at (-0.75,-0.75) {};
		\node[dot,fill=blue,draw=blue] (tr) at (+0.75,+0.75) {};
		\node[dot,fill=RoyalPurple,draw=RoyalPurple] (br) at (+0.75,-0.75) {};
	\end{pgfonlayer}
	\begin{pgfonlayer}{edgelayer}
	\end{pgfonlayer}

\end{tikzpicture}}}
	}_{\text{4 distinct}}
	\hspace{2mm}
	\underbrace{
		\fbox{\resizebox{!}{3mm}{\input{pictures/squares/twoEqualRight.tikz}}}
		\hspace{2mm}
		\fbox{\resizebox{!}{3mm}{\input{pictures/squares/twoEqualLeft.tikz}}}
		\hspace{2mm}
		\fbox{\resizebox{!}{3mm}{\input{pictures/squares/twoEqualTop.tikz}}}
		\hspace{2mm}
		\fbox{\resizebox{!}{3mm}{\input{pictures/squares/twoEqualBot.tikz}}}
		\hspace{2mm}
		\fbox{\resizebox{!}{3mm}{\input{pictures/squares/twoEqualBLTR.tikz}}}
		\hspace{2mm}
		\fbox{\resizebox{!}{3mm}{\input{pictures/squares/twoEqualTLBR.tikz}}}
	}_{\text{3 distinct}}
	\hspace{2mm}
	\underbrace{
		\fbox{\resizebox{!}{3mm}{\input{pictures/squares/oneDifferentBR.tikz}}}
		\hspace{2mm}
		\fbox{\resizebox{!}{3mm}{\input{pictures/squares/oneDifferentBL.tikz}}}
		\hspace{2mm}
		\fbox{\resizebox{!}{3mm}{\input{pictures/squares/oneDifferentTL.tikz}}}
		\hspace{2mm}
		\fbox{\resizebox{!}{3mm}{\input{pictures/squares/oneDifferentTR.tikz}}}
		\hspace{2mm}
		\fbox{\resizebox{!}{3mm}{\input{pictures/squares/twoAndTwoVertical.tikz}}}
		\hspace{2mm}
		\fbox{\resizebox{!}{3mm}{\input{pictures/squares/twoAndTwoHorizontal.tikz}}}
		\hspace{2mm}
		\fbox{\resizebox{!}{3mm}{\input{pictures/squares/twoAndTwoDiagonal.tikz}}}
	}_{\text{2 distinct}}
	\hspace{2mm}
	\underbrace{
		\fbox{\resizebox{!}{3mm}{\input{pictures/squares/allEqual.tikz}}}
	}_{\text{all equal}}
\end{equation}
For example, the component $\rho_{0321}$ of a $4^{+}$-dimensional system will fall into the 1\textsuperscript{st} class from the left above, the component $\rho_{0122}$ will fall into the 2\textsuperscript{nd} class, the component $\rho_{0003}$ into the 8\textsuperscript{th} class, the component $\rho_{0011}$ into the 12\textsuperscript{th} class and the component $\rho_{0000}$ into the 15\textsuperscript{th} class.

Then we look at the individual orbits of components in each class under the symmetry. Classes with components having orbits of order 4 are shown below: each orbit contributes a single independent complex value to the tensor, i.e. two independent real values, and each component class is annotated by the total number of independent real values contributed in dimension $d$. Just as we did above, we are using colours to denote values in $\{1,...,d\}$: the geometric action of $\integersMod{2}\times\integersMod{2}$ on the coloured vertices/edges of the squares exactly mirrors the algebraic action of $\integersMod{2}\times\integersMod{2}$ on the components in the different classes. 
\begin{equation}
	\underbrace{\addstackgap[4pt]{$
		\begin{array}{ccc}
		\resizebox{!}{3mm}{}
		& \stackrel{10}{\leftrightarrow} &
		\resizebox{!}{3mm}{\begin{tikzpicture}

	\begin{pgfonlayer}{nodelayer}
		\node[dot,fill=blue,draw=blue] (tl) at (-0.75,+0.75) {};
		\node[dot,fill=RoyalPurple,draw=RoyalPurple] (bl) at (-0.75,-0.75) {};
		\node[dot,fill=red,draw=red] (tr) at (+0.75,+0.75) {};
		\node[dot,fill=green,draw=green] (br) at (+0.75,-0.75) {};
	\end{pgfonlayer}
	\begin{pgfonlayer}{edgelayer}
	\end{pgfonlayer}

\end{tikzpicture}}
		\\
		\raisebox{-2mm}{$\updownarrow \text{\scriptsize{01}}$} 
		& 
		\vspace{2mm} 
		\hspace{0mm}
		\raisebox{-2mm}{$
			\nearrow 
			\hspace{-4mm} \nwarrow 
			\hspace{-4mm} \searrow 
			\hspace{-3.65mm} \swarrow 
			\hspace{-1mm}\text{\scriptsize{11}}
			\hspace{-1mm}
		$}
		&
		\raisebox{-2mm}{$\updownarrow \text{\scriptsize{01}}$} 
		\\
		\resizebox{!}{3mm}{\begin{tikzpicture}

	\begin{pgfonlayer}{nodelayer}
		\node[dot,fill=green,draw=green] (tl) at (-0.75,+0.75) {};
		\node[dot,fill=red,draw=red] (bl) at (-0.75,-0.75) {};
		\node[dot,fill=RoyalPurple,draw=RoyalPurple] (tr) at (+0.75,+0.75) {};
		\node[dot,fill=blue,draw=blue] (br) at (+0.75,-0.75) {};
	\end{pgfonlayer}
	\begin{pgfonlayer}{edgelayer}
	\end{pgfonlayer}

\end{tikzpicture}}
		& \stackrel{10}{\leftrightarrow} &
		\resizebox{!}{3mm}{\begin{tikzpicture}

	\begin{pgfonlayer}{nodelayer}
		\node[dot,fill=RoyalPurple,draw=RoyalPurple] (tl) at (-0.75,+0.75) {};
		\node[dot,fill=blue,draw=blue] (bl) at (-0.75,-0.75) {};
		\node[dot,fill=green,draw=green] (tr) at (+0.75,+0.75) {};
		\node[dot,fill=red,draw=red] (br) at (+0.75,-0.75) {};
	\end{pgfonlayer}
	\begin{pgfonlayer}{edgelayer}
	\end{pgfonlayer}

\end{tikzpicture}}
		\\
		\end{array}
	$}}_{2\frac{1}{4}d(d-1)(d-2)(d-3)}
	\hspace{5mm}
	\underbrace{\addstackgap[4pt]{$
		\begin{array}{ccc}
		\resizebox{!}{3mm}{\input{pictures/squares/twoEqualRight.tikz}}
		& \stackrel{10}{\leftrightarrow} &
		\resizebox{!}{3mm}{\input{pictures/squares/twoEqualRight10.tikz}}
		\\
		\raisebox{-2mm}{$\updownarrow \text{\scriptsize{01}}$} 
		& 
		\vspace{2mm} 
		\hspace{0mm}
		\raisebox{-2mm}{$
			\nearrow 
			\hspace{-4mm} \nwarrow 
			\hspace{-4mm} \searrow 
			\hspace{-3.65mm} \swarrow 
			\hspace{-1mm}\text{\scriptsize{11}}
			\hspace{-1mm}
		$}
		&
		\raisebox{-2mm}{$\updownarrow \text{\scriptsize{01}}$} 
		\\
		\resizebox{!}{3mm}{\input{pictures/squares/twoEqualRight01.tikz}}
		& \stackrel{10}{\leftrightarrow} &
		\resizebox{!}{3mm}{\input{pictures/squares/twoEqualRight11.tikz}}
		\\
		\end{array}
	$}}_{2\frac{1}{4}d(d-1)(d-2)}
	\hspace{5mm}
	\underbrace{\addstackgap[4pt]{$
		\begin{array}{ccc}
		\resizebox{!}{3mm}{\input{pictures/squares/twoEqualTop.tikz}}
		& \stackrel{10}{\leftrightarrow} &
		\resizebox{!}{3mm}{\input{pictures/squares/twoEqualTop10.tikz}}
		\\
		\raisebox{-2mm}{$\updownarrow \text{\scriptsize{01}}$} 
		& 
		\vspace{2mm} 
		\hspace{0mm}
		\raisebox{-2mm}{$
			\nearrow 
			\hspace{-4mm} \nwarrow 
			\hspace{-4mm} \searrow 
			\hspace{-3.65mm} \swarrow 
			\hspace{-1mm}\text{\scriptsize{11}}
			\hspace{-1mm}
		$}
		&
		\raisebox{-2mm}{$\updownarrow \text{\scriptsize{01}}$} 
		\\
		\resizebox{!}{3mm}{\input{pictures/squares/twoEqualTop01.tikz}}
		& \stackrel{10}{\leftrightarrow} &
		\resizebox{!}{3mm}{\input{pictures/squares/twoEqualTop11.tikz}}
		\\
		\end{array}
	$}}_{2\frac{1}{4}d(d-1)(d-2)}
	\hspace{5mm}
	\underbrace{\addstackgap[4pt]{$
		\begin{array}{ccc}
		\resizebox{!}{3mm}{\input{pictures/squares/twoEqualTLBR.tikz}}
		& \stackrel{10}{\leftrightarrow} &
		\resizebox{!}{3mm}{\input{pictures/squares/twoEqualTLBR10.tikz}}
		\\
		\raisebox{-2mm}{$\updownarrow \text{\scriptsize{01}}$} 
		& 
		\vspace{2mm} 
		\hspace{0mm}
		\raisebox{-2mm}{$
			\nearrow 
			\hspace{-4mm} \nwarrow 
			\hspace{-4mm} \searrow 
			\hspace{-3.65mm} \swarrow 
			\hspace{-1mm}\text{\scriptsize{11}}
			\hspace{-1mm}
		$}
		&
		\raisebox{-2mm}{$\updownarrow \text{\scriptsize{01}}$} 
		\\
		\resizebox{!}{3mm}{\input{pictures/squares/twoEqualTLBR01.tikz}}
		& \stackrel{10}{\leftrightarrow} &
		\resizebox{!}{3mm}{\input{pictures/squares/twoEqualTLBR11.tikz}}
		\\
		\end{array}
	$}}_{2\frac{1}{4}d(d-1)(d-2)}
	\hspace{5mm}
	\underbrace{\addstackgap[4pt]{$
		\begin{array}{ccc}
		\resizebox{!}{3mm}{\input{pictures/squares/oneDifferentBR.tikz}}
		& \stackrel{10}{\leftrightarrow} &
		\resizebox{!}{3mm}{\input{pictures/squares/oneDifferentBR10.tikz}}
		\\
		\raisebox{-2mm}{$\updownarrow \text{\scriptsize{01}}$} 
		& 
		\vspace{2mm} 
		\hspace{0mm}
		\raisebox{-2mm}{$
			\nearrow 
			\hspace{-4mm} \nwarrow 
			\hspace{-4mm} \searrow 
			\hspace{-3.65mm} \swarrow 
			\hspace{-1mm}\text{\scriptsize{11}}
			\hspace{-1mm}
		$}
		&
		\raisebox{-2mm}{$\updownarrow \text{\scriptsize{01}}$} 
		\\
		\resizebox{!}{3mm}{\input{pictures/squares/oneDifferentBR01.tikz}}
		& \stackrel{10}{\leftrightarrow} &
		\resizebox{!}{3mm}{\input{pictures/squares/oneDifferentBR11.tikz}}
		\\
		\end{array}
	$}}_{2\frac{1}{4}d(d-1)}
\end{equation}
Classes with components having orbits of order 2 and 1 are shown below, each component class annotated by the total number of independent real values contributed in dimension $d$. Each orbit in the first, second and fourth classes contributes a single independent real value, because each component is stabilised by (at least) one self-adjoining symmetry; each orbit in the third class contributes instead two independent real values, because the components are only stabilised by a self-transposing symmetry.
\begin{equation}
	\underbrace{\addstackgap[4pt]{$
		\begin{array}{ccc}
		\resizebox{!}{3mm}{\input{pictures/squares/twoAndTwoVertical.tikz}}
		& \stackrel{10,11}{\leftrightarrow} &
		\resizebox{!}{3mm}{\input{pictures/squares/twoAndTwoVertical10.tikz}}
		\end{array}
	$}}_{\frac{1}{2}d(d-1)}
	\hspace{5mm}
	\underbrace{\addstackgap[4pt]{$
		\begin{array}{ccc}
		\resizebox{!}{3mm}{\input{pictures/squares/twoAndTwoHorizontal.tikz}}
		& \stackrel{01,11}{\leftrightarrow} &
		\resizebox{!}{3mm}{\input{pictures/squares/twoAndTwoHorizontal01.tikz}}
		\end{array}
	$}}_{\frac{1}{2}d(d-1)}
	\hspace{5mm}
	\underbrace{\addstackgap[4pt]{$
		\begin{array}{ccc}
		\resizebox{!}{3mm}{\input{pictures/squares/twoAndTwoDiagonal.tikz}}
		& \stackrel{10,01}{\leftrightarrow} &
		\resizebox{!}{3mm}{\input{pictures/squares/twoAndTwoDiagonal10.tikz}}
		\end{array}
	$}}_{2\frac{1}{2}d(d-1)}
	\hspace{5mm}
	\underbrace{\addstackgap[4pt]{$
		\resizebox{!}{3mm}{\input{pictures/squares/allEqual.tikz}}
	$}}_{\text{d}}
\end{equation}
Adding up the contributions from all orbit classes, we see that the states of $d$-dimensional density hypercubes form a convex cone of real dimension $\frac{1}{2}(d^4-3d^3+7d^2-3d)$ within the $(2d^4)$-dimensional real vector space of complex fourth-order tensors.

\subsection{Normalisation and causality}

The ``forest'' discarding maps $\trace{\,\,\DDCategory{H}}:=\CPMCategory{\trace{\mathcal{H}}}$ in \DDCategory{\fHilbCategory} (i.e.\ the doubled versions of the discarding maps of \CPMCategory{\fHilbCategory}) form an environment structure \cite{gogioso2017categorical,coecke2010environment}, and we say that a map of density hypercubes is \defi{normalised} if the corresponding CP map is trace preserving (with normalised states as a special case): 
\begin{equation}
	\scalebox{0.7}{$\input{pictures/normalisedDHmap1.tikz}$}
	\hspace{3mm} \text{normalised} \hspace{3mm}
	\Leftrightarrow \hspace{3mm}
	\scalebox{0.7}{$\input{pictures/normalisedDHmap2.tikz}$}
	\hspace{3mm} = \hspace{3mm}
	\scalebox{0.7}{$\begin{tikzpicture}

	\begin{pgfonlayer}{nodelayer}
		\node[style=none] (Hb) at (-2,+2) {};
		\node[style=none] (H) at (-2,-2) {};
		\node[style=trace] (traceHb) at (+2,+2) {};
		\node[style=trace] (traceH) at (+2,-2) {};
	\end{pgfonlayer}
	\begin{pgfonlayer}{edgelayer}
		\draw[-] (Hb.center) to (traceHb);
		\draw[-] (H.center) to (traceH);
	\end{pgfonlayer}

\end{tikzpicture}$}
\end{equation}
Normalised maps of density hypercubes form a sub-SMC of \DDCategory{\fHilbCategory}, which we refer to as the \defi{normalised sub-category}. \defi{Sub-normalised} maps of density hypercubes can be defined analogously by requiring the corresponding CP map to be trace non-increasing: they also form a sub-SMC of \DDCategory{\fHilbCategory}, which we refer to as the \defi{sub-normalised sub-category}.

Despite the presence of several kinds of discarding maps, the following results shows that the sub-normalised sub-category is causal \cite{Chiribella-purification}, or equivalently that that the normalised sub-category is terminal \cite{coecke2013causal,coecke2016terminality}.

\newcounter{proposition_causality_c}
\setcounter{proposition_causality_c}{\value{theorem_c}}
\begin{proposition}
\label{proposition_causality}
	The process theory $\DDCategory{\fHilbCategory}$ is causal, in the following sense: for every object $\DDCategory{H}$, the only effect $\DDCategory{H} \rightarrow \reals^+$ in \DDCategory{\fHilbCategory} which yields the scalar $1$ on all normalised states of $\DDCategory{H}$ is the ``forest'' discarding map of density hypercubes $\trace{\,\,\DDCategory{H}}$. 
\end{proposition}

\section{Decoherence and Hyper-decoherence}
\label{section_hyperDecoherence}

So far, we have constructed a symmetric monoidal category, which is enriched in convex cones and comes equipped with an environment structure providing a notion of normalization. The final ingredients necessary for the definition of the \defi{categorical probabilistic theory of density hypercubes} is the demonstration that classical systems and quantum systems arise in the Karoubi envelope of \DDCategory{\fHilbCategory} by choosing some suitable family of decoherence and hyper-decoherence maps.

\subsection{Decoherence to classical theory}

Consider a finite-dimensional Hilbert space $H$ and a classical structure $\hbox{\input{symbols/ZdotSym.tex}}\!$ on it, associated with some orthonormal basis $(\ket{\psi_x})_{x \in X}$. We define the \defi{$\hbox{\input{symbols/ZdotSym.tex}}\!$-decoherence map} $\decoh{\hbox{\input{symbols/ZdotSym.tex}}\!}$ on the density hypercube $\DDCategory{H}$ to be the following morphism in \DDCategory{\fHilbCategory}:
\begin{equation}
	\decoh{\hbox{\input{symbols/ZdotSym.tex}}\!}
	\hspace{3mm} := \hspace{3mm}
	\scalebox{0.7}{$
		\input{pictures/DHdecoherence1.tikz}
	$}
	\hspace{3mm} = \hspace{3mm}
	\sum_{x \in X} \hspace{3mm}
	\scalebox{0.7}{$
		\input{pictures/DHdecoherence2.tikz}
	$}
\end{equation}
The $\decoh{\hbox{\input{symbols/ZdotSym.tex}}\!}$ map defined above is idempotent, so it can be used to define classical systems via the Karoubi envelope construction---in the same way as ordinary decoherence maps gives rise to classical systems in quantum theory. It should be noted that decoherence maps defined this way are sub-normalised but not normalised, so that the hyperquantum-to-classical transition in the theory of density hypercubes is not deterministic; we defer further discussion of this point to the next sub-section on hyper-decoherence.

\newcounter{proposition_classical_c}
\setcounter{proposition_classical_c}{\value{theorem_c}}
\begin{proposition}
\label{proposition_classical}
	Let $\KaroubiEnvelope{\DDCategory{\fHilbCategory}}$ be the Karoubi envelope of \DDCategory{\fHilbCategory}, and write $\KaroubiEnvelope{\DDCategory{\fHilbCategory}}_K$ for the full subcategory of $\KaroubiEnvelope{\DDCategory{\fHilbCategory}}$ spanned by objects in the form $(\DDCategory{H},\decoh{\hbox{\input{symbols/ZdotSym.tex}}\!})$. There is an $\reals^+$-linear monoidal equivalence of categories between $\KaroubiEnvelope{\DDCategory{\fHilbCategory}}_K$ and the probabilistic theory $\RMatCategory{\reals^+}$ of classical systems. Furthermore, classical stochastic maps correspond to the maps in $\KaroubiEnvelope{\DDCategory{\fHilbCategory}}_K$ normalised with respect to the discarding maps $\trace{\,\,(\DDCategory{H},\decoh{\hbox{\input{symbols/ZdotSym.tex}}\!})} := \trace{\,\,\DDCategory{H}} \circ \decoh{\hbox{\input{symbols/ZdotSym.tex}}\!}$, which we can write explicitly as follows:
	\newcounter{proposition_classical_c_eq}
	\setcounter{proposition_classical_c_eq}{\value{equation}}
	\begin{equation}
		\label{proposition_classical_eq_label}
		\trace{\,\,(\DDCategory{H},\decoh{\hbox{\input{symbols/ZdotSym.tex}}\!})}
		\hspace{3mm} := \hspace{3mm}
		\scalebox{0.7}{$
			\input{pictures/DHclassicalDiscardingMaps1.tikz}
		$}
		\hspace{3mm} = \hspace{3mm}
		\scalebox{0.7}{$
			\input{pictures/DHclassicalDiscardingMaps2.tikz}
		$}
	\end{equation}
\end{proposition}

\subsection{Hyper-decoherence to quantum theory}

We now show that the quantum systems arise in the Karoubi envelope as well, via suitable \defi{hyper-decoherence} maps. Recall that the generic discarding map in the theory of density hypercubes involved two pieces: (the doubled version of) a traditional discarding map from \CPMCategory{\fHilbCategory} and a second ``tree-on-a-bridge'' discarding map derived from a classical structure $\hbox{\input{symbols/ZdotSym.tex}}\!$. In the previous sub-section, we saw that the latter is the discarding map of some classical system living in the Karoubi envelope \KaroubiEnvelope{\DDCategory{\fHilbCategory}}, and that it can be used to define the ``hyper-quantum--to--classical'' decoherence maps. In this sub-section, we shall see that this ``hyper-quantum--to--classical'' decoherence process can be understood in two steps: a ``hyper-quantum--to--quantum'' hyper-decoherence, followed by the usual ``quantum--to--classical'' decoherence.

If $\hbox{\input{symbols/ZdotSym.tex}}\!$ is a classical structure on a density hypercube $\DDCategory{H}$, we define the \defi{$\hbox{\input{symbols/ZdotSym.tex}}\!$-hyper-decoherence map} $\hypdecoh{\hbox{\input{symbols/ZdotSym.tex}}\!}$ to be the following map of density hypercubes:
\begin{equation}	
	\hypdecoh{\hbox{\input{symbols/ZdotSym.tex}}\!}
	\hspace{3mm} := \hspace{3mm}
	\scalebox{0.7}{$
		\input{pictures/DHhyperdecoherence.tikz}
	$}
\end{equation}
Hyper-decoherence maps are idempotent, and hence we can consider the full subcategory $\mathcal{C}$ of the Karoubi envelope \KaroubiEnvelope{\DDCategory{\fHilbCategory}} spanned by objects in the form $(\DDCategory{H},\hypdecoh{\hbox{\input{symbols/ZdotSym.tex}}\!})$: doing so allows us to prove that the hyper-decoherence maps defined above truly provide the desired ``hyper-quantum--to--quantum'' decoherence, as considered by \cite{Lee-Selby-interference,lee2017no}.

\newcounter{proposition_quantum_c}
\setcounter{proposition_quantum_c}{\value{theorem_c}}
\begin{proposition}
\label{proposition_quantum}
	Let $\KaroubiEnvelope{\DDCategory{\fHilbCategory}}$ be the Karoubi envelope of \DDCategory{\fHilbCategory}, and write $\KaroubiEnvelope{\DDCategory{\fHilbCategory}}_Q$ for the full subcategory of $\KaroubiEnvelope{\DDCategory{\fHilbCategory}}$ spanned by objects in the form $(\DDCategory{H},\hypdecoh{\hbox{\input{symbols/ZdotSym.tex}}\!})$. There is an $\reals^+$-linear monoidal equivalence of categories between $\KaroubiEnvelope{\DDCategory{\fHilbCategory}}_Q$ and the probabilistic theory $\CPMCategory{\fHilbCategory}$ of quantum systems and CP maps between them. Furthermore, trace-preserving CP maps correspond to the maps in $\KaroubiEnvelope{\DDCategory{\fHilbCategory}}_Q$ normalised with respect to the discarding maps $\trace{\,\,(\DDCategory{H},\hypdecoh{\hbox{\input{symbols/ZdotSym.tex}}\!})} := \trace{\,\,\DDCategory{H}} \circ \hypdecoh{\hbox{\input{symbols/ZdotSym.tex}}\!}$, which we can write explicitly as follows:
	\newcounter{proposition_quantum_c_eq}
	\setcounter{proposition_quantum_c_eq}{\value{equation}}
	\begin{equation}
		\label{proposition_quantum_eq_label}
		\trace{\,\,(\DDCategory{H},\hypdecoh{\hbox{\input{symbols/ZdotSym.tex}}\!})}
		\hspace{3mm} := \hspace{3mm}
		\scalebox{0.7}{$
			\input{pictures/DHquantumDiscardingMaps1.tikz}
		$}
		\hspace{3mm} = \hspace{3mm}
		\scalebox{0.7}{$
			\input{pictures/DHclassicalDiscardingMaps2.tikz}
		$}
	\end{equation}
\end{proposition} 
Taking the double-dilation construction together with the content of Propositions \ref{proposition_classical} and \ref{proposition_quantum}, we come to the following definition of a categorical probabilistic theory \cite{gogioso2017categorical} of density hypercubes.
\begin{definition}  
The \defi{categorical probabilistic theory of density hypercubes} \DHCategory{\fHilbCategory} is defined the be the full sub-SMC of \KaroubiEnvelope{\DDCategory{\fHilbCategory}} spanned by objects in the following form: 
\begin{itemize}
	\item the \defi{density hypercubes} $(\DDCategory{H},\id{\DDCategory{H}})$;
	\item the \defi{quantum systems} $(\DDCategory{H},\hypdecoh{\hbox{\input{symbols/ZdotSym.tex}}\!})$, for all classical structures $\hbox{\input{symbols/ZdotSym.tex}}\!$ on $H$;
	\item the \defi{classical systems} $(\DDCategory{H},\decoh{\hbox{\input{symbols/ZdotSym.tex}}\!})$, for all classical structures $\hbox{\input{symbols/ZdotSym.tex}}\!$ on $H$.
\end{itemize}
The environment structure for the categorical probabilistic theory is given by the discarding maps $\trace{\,\,\DDCategory{H}}$, $\trace{\,\,(\DDCategory{H},\hypdecoh{\hbox{\input{symbols/ZdotSym.tex}}\!})}$ and $\trace{\,\,(\DDCategory{H},\decoh{\hbox{\input{symbols/ZdotSym.tex}}\!})}$ respectively. The classical sub-category for the categorical probabilistic theory is the full sub-SMC spanned by the classical systems.
\end{definition}

The hyper-quantum--to--classical and hyper-quantum--to--quantum decoherence maps of density hypercubes play well together with the quantum--to--classical decoherence map of quantum theory: the decoherence map $\decoh{\hbox{\input{symbols/ZdotSym.tex}}\!}:(\DDCategory{H},\id{\DDCategory{H}}) \rightarrow (\DDCategory{H},\decoh{\hbox{\input{symbols/ZdotSym.tex}}\!})$ of density hypercubes factors, as one would expect, into the hyper-decoherence map $\hypdecoh{\hbox{\input{symbols/ZdotSym.tex}}\!}:(\DDCategory{H},\id{\DDCategory{H}}) \rightarrow (\DDCategory{H},\hypdecoh{\hbox{\input{symbols/ZdotSym.tex}}\!})$ followed by the decoherence map $\decoh{\hbox{\input{symbols/ZdotSym.tex}}\!}:(\DDCategory{H},\hypdecoh{\hbox{\input{symbols/ZdotSym.tex}}\!}) \rightarrow (\DDCategory{H},\decoh{\hbox{\input{symbols/ZdotSym.tex}}\!})$ of quantum systems. From this, it is clear that the reason why hyper-quantum--to--classical transition was sub-normalised is that the hyper-quantum--to--quantum transition itself is sub-normalised (see Appendix \ref{appendix_extension}).

The sub-normalisation of hyper-decoherence maps is a sign that the theory of density hypercubes presented here is still partially incomplete, and that some suitable extension will need to be researched in the future. What we know for sure is that the current theory does not satisfy the no-restriction condition on effects, and that an extension in which hyper-decoherence maps are normalised is possible: the additional effect needed by normalisation exists in \CPMCategory{\fHilbCategory} and is non-negative on all states of \DDCategory{\fHilbCategory} (see Appendix \ref{appendix_extension}). In line with the recent no-go theorem of \cite{lee2017no}, preliminary considerations seem to indicated that the addition of said effect would mean that the theory no longer satisfies purification.

\newcommand{\redbullet}{\begin{color}{red}\bullet\end{color}}
\newcommand{\bluebullet}{\begin{color}{blue}\bullet\end{color}}
\newcommand{\greenbullet}{\begin{color}{green}\bullet\end{color}}
\newcommand{\RoyalPurplebullet}{\begin{color}{RoyalPurple}\bullet\end{color}}

\newpage
\section{Higher Order Interference}
\label{section_higherOrderInterference}

In this section, we will show that the theory of density hypercubes displays third- and fourth-order interference effects, broadly inspired by the framework for higher-order interference in GPTs presented by \cite{HOP,Lee-Selby-interference,Barnum-interference}. Because interference has to do with decompositions of the identity map in terms of certain projectors, we begin by introducing a handy graphical notation for keeping track of the various pieces that the identity map is composed of.

The identity map of hyper-quantum systems $\id{\DDCategory{H}} : \DDCategory{H} \rightarrow \DDCategory{H}$ takes the following explicit form in $\fHilbCategory$, for any orthonormal basis $(\ket{\psi_x})_{x \in X}$ of the Hilbert space ${H}$:
\begin{equation}
	\scalebox{0.8}{$
		\input{pictures/DHidentity1.tikz}
	$}
		\hspace{3mm} = \hspace{1mm}
	\begin{color}{gray}
		\sum_{x_{00},x_{01},x_{10},x_{11}\in X} \hspace{2mm}
	\end{color}
	\scalebox{0.8}{$
		\input{pictures/DHidentity2.tikz}
	$}
\end{equation}
In order to denote the pieces in the decomposition corresponding to specific values $x_{00}, x_{01}, x_{10}, x_{11} \in X$ of the indices, we adopt the following graphical notation, inspired by the $\integersMod{2} \times \integersMod{2}$ symmetry of the components:
\begin{equation}
	\scalebox{0.8}{$
		\input{pictures/DHprojectorNotation1old.tikz}
	$}
		\hspace{3mm} := \hspace{1mm}
	\scalebox{0.8}{$
		\input{pictures/DHidentity2.tikz}
	$}
\end{equation}
In fact, we will adopt the same colour-based notation for index values which we originally introduced in Section \ref{section_densityHypercubes}, so that the following is a decomposition piece involving two distinct index values $\{\redbullet,\bluebullet\} \subseteq X$:
\begin{equation}
	\scalebox{1}{$
		\input{pictures/DHprojectorNotation2.tikz}
	$}
	\hspace{3mm} := \hspace{3mm}
	\scalebox{1}{$
		\input{pictures/DHidentity2example.tikz}
	$}
\end{equation}
Using the colour-based notation defined above for its pieces, the identity on a 2-dimensional hyper-quantum system (with $X = \{\bluebullet,\redbullet\}$) would be fully decomposed as follows:
\begin{equation}
		\id{\complexs^2}
		\hspace{1mm} = \hspace{1mm}
		{\resizebox{!}{3mm}{\input{pictures/squares/allEqual.tikz}}}
		+
		{\resizebox{!}{3mm}{\input{pictures/squares/allEqualred.tikz}}}
		+
		{\resizebox{!}{3mm}{\input{pictures/squares/twoAndTwoVertical.tikz}}}
		+
		{\resizebox{!}{3mm}{\input{pictures/squares/twoAndTwoVertical10.tikz}}}
		+
		{\resizebox{!}{3mm}{\input{pictures/squares/twoAndTwoHorizontal.tikz}}}
		+
		{\resizebox{!}{3mm}{\input{pictures/squares/twoAndTwoHorizontal01.tikz}}}
		+
		{\resizebox{!}{3mm}{\input{pictures/squares/twoAndTwoDiagonal.tikz}}}
		+
		{\resizebox{!}{3mm}{\input{pictures/squares/twoAndTwoDiagonal10.tikz}}}
		+
		{\resizebox{!}{3mm}{\input{pictures/squares/oneDifferentBR.tikz}}}
		+
		{\resizebox{!}{3mm}{\input{pictures/squares/oneDifferentBRred.tikz}}}
		+
		{\resizebox{!}{3mm}{\input{pictures/squares/oneDifferentBL.tikz}}}
		+
		{\resizebox{!}{3mm}{\input{pictures/squares/oneDifferentBLred.tikz}}}
		+
		{\resizebox{!}{3mm}{\input{pictures/squares/oneDifferentTL.tikz}}}
		+
		{\resizebox{!}{3mm}{\input{pictures/squares/oneDifferentTLred.tikz}}}
		+
		{\resizebox{!}{3mm}{\input{pictures/squares/oneDifferentTR.tikz}}}
		+
		{\resizebox{!}{3mm}{\input{pictures/squares/oneDifferentTRred.tikz}}}
\end{equation}
The same notation can be used to graphically decompose projectors corresponding to various subspaces determined by the orthonormal basis $(\ket{\psi_x})_{x \in X}$. For any non-empty subset $U \subseteq X$, we define the following projector on $\DDCategory{H}$:
\begin{equation}
	P_{U} := \DDCategorySym\left(\sum_{x \in U} \ket{\psi_x}\bra{\psi_x}\right)
\end{equation}
In particular, the $P_{\{\bluebullet\}}$ for $\bluebullet \in X$ are the projectors corresponding to the individual vectors $\ket{\psi_{\bluebullet}}$ of the basis, while $P_{X}$ is the identity $\id{\DDCategory{H}}$. No matter how large $X$ is (with $\#X \geq 2$), the projectors $P_{\{\bluebullet,\redbullet\}}$ corresponding to 2-element subsets $\{\bluebullet,\redbullet\} \subseteq X$ are always decomposed as follows:
\begin{equation}
		P_{\{\bluebullet,\redbullet\}}
		\hspace{1mm} = \hspace{1mm}
		{\resizebox{!}{3mm}{\input{pictures/squares/allEqual.tikz}}}
		+
		{\resizebox{!}{3mm}{\input{pictures/squares/allEqualred.tikz}}}
		+
		{\resizebox{!}{3mm}{\input{pictures/squares/twoAndTwoVertical.tikz}}}
		+
		{\resizebox{!}{3mm}{\input{pictures/squares/twoAndTwoVertical10.tikz}}}
		+
		{\resizebox{!}{3mm}{\input{pictures/squares/twoAndTwoHorizontal.tikz}}}
		+
		{\resizebox{!}{3mm}{\input{pictures/squares/twoAndTwoHorizontal01.tikz}}}
		+
		{\resizebox{!}{3mm}{\input{pictures/squares/twoAndTwoDiagonal.tikz}}}
		+
		{\resizebox{!}{3mm}{\input{pictures/squares/twoAndTwoDiagonal10.tikz}}}
		+
		{\resizebox{!}{3mm}{\input{pictures/squares/oneDifferentBR.tikz}}}
		+
		{\resizebox{!}{3mm}{\input{pictures/squares/oneDifferentBRred.tikz}}}
		+
		{\resizebox{!}{3mm}{\input{pictures/squares/oneDifferentBL.tikz}}}
		+
		{\resizebox{!}{3mm}{\input{pictures/squares/oneDifferentBLred.tikz}}}
		+
		{\resizebox{!}{3mm}{\input{pictures/squares/oneDifferentTL.tikz}}}
		+
		{\resizebox{!}{3mm}{\input{pictures/squares/oneDifferentTLred.tikz}}}
		+
		{\resizebox{!}{3mm}{\input{pictures/squares/oneDifferentTR.tikz}}}
		+
		{\resizebox{!}{3mm}{\input{pictures/squares/oneDifferentTRred.tikz}}}
\end{equation}
The presence of higher order interference in the theory of density hypercubes is really a matter of shapes: when the dimension of $\mathcal{H}$ is at least 3, the identity contains pieces of shapes which do not appear in projectors for 1-element and 2-element subsets. Because of this, in the theory of density hypercubes the probabilities obtained from 1-slit and 2-slit interference experiments will not be enough to explain the probabilities obtained from 3-slit and/or 4-slit experiments; however, the probabilities obtained from 1-slit, 2-slit, 3-slit and 4-slit experiments will always be enough to explain the probabilities obtained in experiments with 5 or more slits. 

Below you can see an atlas of all possible shapes that pieces of the identity can take in our graphical notation, together with a note of the smallest dimension that a projector must have to contain pieces of that shape:
\begin{equation}
	\underbrace{
		\fbox{\resizebox{!}{3mm}{\input{pictures/squares/shapes/allEqual.tikz}}}
	}_{\text{1-dim}}
	\hspace{2mm}
	\underbrace{
		\fbox{\resizebox{!}{3mm}{\input{pictures/squares/shapes/oneDifferentBR.tikz}}}
		\hspace{2mm}
		\fbox{\resizebox{!}{3mm}{\input{pictures/squares/shapes/oneDifferentBL.tikz}}}
		\hspace{2mm}
		\fbox{\resizebox{!}{3mm}{\input{pictures/squares/shapes/oneDifferentTL.tikz}}}
		\hspace{2mm}
		\fbox{\resizebox{!}{3mm}{\input{pictures/squares/shapes/oneDifferentTR.tikz}}}
		\hspace{2mm}
		\fbox{\resizebox{!}{3mm}{\input{pictures/squares/shapes/twoAndTwoVertical.tikz}}}
		\hspace{2mm}
		\fbox{\resizebox{!}{3mm}{\input{pictures/squares/shapes/twoAndTwoHorizontal.tikz}}}
		\hspace{2mm}
		\fbox{\resizebox{!}{3mm}{\input{pictures/squares/shapes/twoAndTwoDiagonal.tikz}}}
	}_{\text{2-dim}}
	\hspace{2mm}
	\underbrace{
		\fbox{\resizebox{!}{3mm}{\input{pictures/squares/shapes/twoEqualRight.tikz}}}
		\hspace{2mm}
		\fbox{\resizebox{!}{3mm}{\input{pictures/squares/shapes/twoEqualLeft.tikz}}}
		\hspace{2mm}
		\fbox{\resizebox{!}{3mm}{\input{pictures/squares/shapes/twoEqualTop.tikz}}}
		\hspace{2mm}
		\fbox{\resizebox{!}{3mm}{\input{pictures/squares/shapes/twoEqualBot.tikz}}}
		\hspace{2mm}
		\fbox{\resizebox{!}{3mm}{\input{pictures/squares/shapes/twoEqualBLTR.tikz}}}
		\hspace{2mm}
		\fbox{\resizebox{!}{3mm}{\input{pictures/squares/shapes/twoEqualTLBR.tikz}}}
	}_{\text{3-dim}}
	\hspace{2mm}
	\underbrace{
		\fbox{\resizebox{!}{3mm}{\begin{tikzpicture}

	\begin{pgfonlayer}{nodelayer}
		\node[dot,fill=black,draw=black] (tl) at (-0.75,+0.75) {};
		\node[dot,fill=black,draw=black] (bl) at (-0.75,-0.75) {};
		\node[dot,fill=black,draw=black] (tr) at (+0.75,+0.75) {};
		\node[dot,fill=black,draw=black] (br) at (+0.75,-0.75) {};
	\end{pgfonlayer}
	\begin{pgfonlayer}{edgelayer}
	\end{pgfonlayer}

\end{tikzpicture}}}
	}_{\text{4-dim}}
\end{equation}
The shape labelled as 1-dimensional only requires a single index value, and hence pieces of that shape appear in all projectors. The shapes labelled as 2-dimensional all require exactly two distinct index values, and hence pieces of those shapes can only appear in projectors for subsets with at least 2 elements. The shapes labelled as 3-dimensional all require exactly three distinct index values, and hence pieces of those shapes can only appear in projectors for subsets with at least 3 elements. Finally, the shape labelled as 4-dimensional requires exactly four index values, and hence pieces of that shape can only appear in projectors for subsets with at least 4 elements.

Thanks to the graphical notation introduced above, we already have a first intuition of why density hypercubes display higher-order interference. However, a rigorous proof requires a complete set-up with states, projectors, measurements and probabilities for a $d$-slit interference experiment, so that is what we now endeavour to provide. 
\begin{enumerate}
	\item We choose a $d$-dimensional space $H \isom \complexs^d$, and we value our tensor indices in the set $X = \{1,...,d\}$ (the same set that we use to label the $d$ slits).
	\item We fix an orthonormal basis $(\ket{x})_{x \in X}$, and we interpret $\ket{x}$ to be the state in which the particle goes through slit $x$ with certainty.
	\item The initial state for the particle is the superposition state in which the particle goes through each slit with the same amplitude. More precisely, it is the pure normalised density hypercube state $\rho_+$ corresponding to the vector $\frac{1}{\sqrt{d}}\ket{\psi_+} := \frac{1}{\sqrt{d}}(\ket{1} + ... + \ket{d})$:
	\begin{equation}
		\rho_+
		\hspace{1mm} := \hspace{1mm}
		\frac{1}{d^2} \hspace{1mm}
		\scalebox{0.8}{$
			\input{pictures/InterferenceExpInitialState.tikz}
		$}
	\end{equation}
	\item The particle goes through some non-empty subset $U \subseteq X$ of slits at random: afterwards, the experimenter knows which subset the particle passed through, but no more information than that is available in the universe.
	\item The particle is measured at the screen, and the experimenter estimates the probability $\mathbb{P}[+|U]$ that the particle is still in state $\rho_+$ after having passed through the given subset $U$ of the slits:
	\begin{equation}
		\mathbb{P}[+|U] 
		\hspace{1mm} := \hspace{1mm}
		\frac{1}{d^2} \hspace{1mm}
		\scalebox{0.8}{$
			\input{pictures/InterferenceExpProbability.tikz}
		$}
		\hspace{1mm} \frac{1}{d^2} 
	\end{equation}
\end{enumerate}
It is immediate to see that the outcome probability $\mathbb{P}[+|U]$ depends solely on the number of different pieces appearing in the decomposition of the projector $P_U$: 
\begin{equation}
	\mathbb{P}[+|U] = \frac{1}{d^4} \cdot \textnormal{number of pieces in }P_U
\end{equation}
To count the number of pieces in $P_U$, it is convenient to group them by shapes. If $U$ is a subset of size $k$, standard combinatorial arguments can be used to obtain the number of pieces of each shape appearing in the decomposition (as a convention, we set ${k \choose{j}} = 0$ for $j > k$):
\begin{equation}
	\underbrace{
		\fbox{\resizebox{!}{3mm}{\input{pictures/squares/shapes/allEqual.tikz}}}
	}_{{k \choose{1}} \cdot 1!}
	\hspace{2mm}
	\underbrace{
		\fbox{\resizebox{!}{3mm}{\input{pictures/squares/shapes/oneDifferentBR.tikz}}}
		\hspace{2mm}
		\fbox{\resizebox{!}{3mm}{\input{pictures/squares/shapes/oneDifferentBL.tikz}}}
		\hspace{2mm}
		\fbox{\resizebox{!}{3mm}{\input{pictures/squares/shapes/oneDifferentTL.tikz}}}
		\hspace{2mm}
		\fbox{\resizebox{!}{3mm}{\input{pictures/squares/shapes/oneDifferentTR.tikz}}}
		\hspace{2mm}
		\fbox{\resizebox{!}{3mm}{\input{pictures/squares/shapes/twoAndTwoVertical.tikz}}}
		\hspace{2mm}
		\fbox{\resizebox{!}{3mm}{\input{pictures/squares/shapes/twoAndTwoHorizontal.tikz}}}
		\hspace{2mm}
		\fbox{\resizebox{!}{3mm}{\input{pictures/squares/shapes/twoAndTwoDiagonal.tikz}}}
	}_{\text{7 shapes, }{k \choose{2}} \cdot 2!\text{ each}}
	\hspace{2mm}
	\underbrace{
		\fbox{\resizebox{!}{3mm}{\input{pictures/squares/shapes/twoEqualRight.tikz}}}
		\hspace{2mm}
		\fbox{\resizebox{!}{3mm}{\input{pictures/squares/shapes/twoEqualLeft.tikz}}}
		\hspace{2mm}
		\fbox{\resizebox{!}{3mm}{\input{pictures/squares/shapes/twoEqualTop.tikz}}}
		\hspace{2mm}
		\fbox{\resizebox{!}{3mm}{\input{pictures/squares/shapes/twoEqualBot.tikz}}}
		\hspace{2mm}
		\fbox{\resizebox{!}{3mm}{\input{pictures/squares/shapes/twoEqualBLTR.tikz}}}
		\hspace{2mm}
		\fbox{\resizebox{!}{3mm}{\input{pictures/squares/shapes/twoEqualTLBR.tikz}}}
	}_{\text{6 shapes, }{k \choose{3}} \cdot 3! \text{ each}}
	\hspace{2mm}
	\underbrace{
		\fbox{\resizebox{!}{3mm}{}}
	}_{{k \choose{4}} \cdot 4!}
\end{equation}
By adding up the contributions from pieces of each shape, we get the following closed expression for the outcome probability  $\mathbb{P}[+|U]$:
\begin{equation}
	 \mathbb{P}[+|U] = \frac{1}{d^4}(\# U)^4
\end{equation}
For $d\geq 3$ we observe third-order interference, witnessed (by definition) by the following inequality:
\begin{eqnarray}
	\mathbb{P}[+|\{1,2,3\}] \neq
		\sum_{\stackrel{V \subset \{1,2,3\}}{\textnormal{s.t. }\#V = 2}}\hspace{-2mm}\mathbb{P}[+|V] \hspace{2mm}- \sum_{\stackrel{V \subset \{1,2,3\}}{\textnormal{s.t. }\#V = 1}}\hspace{-2mm}\mathbb{P}[+|V]
\end{eqnarray}
Indeed, the left hand side evaluates to $81/d^4$, while the right hand side evaluates to the following expression (again by standard combinatorial arguments):
\begin{equation}
	\frac{1}{d^4}\Big[ {3\choose{2}}2^4 - {3\choose{1}}1^4 \Big] = \frac{1}{d^4}45 \neq \frac{1}{d^4}81
\end{equation}
The difference between left and right hand sides is $36/d^4$, which is exactly the contribution $\frac{1}{d^4}6\cdot{3\choose{3}}\cdot3!$ of the 6 shapes requiring 3 distinct values (appearing in $P_{\{1,2,3\}}$ but not in any of the sub-projectors).
For $d\geq 4$ we observe fourth-order interference, witnessed (by definition) by the following inequality:
\begin{eqnarray}
	\mathbb{P}[+|\{1,2,3,4\}] \neq
		\sum_{\stackrel{V \subset \{1,2,3,4\}}{\textnormal{s.t. }\#V = 3}}\hspace{-2mm}\mathbb{P}[+|V]\hspace{2mm} - \sum_{\stackrel{V \subset \{1,2,3,4\}}{\textnormal{s.t. }\#V = 2}}\hspace{-2mm}\mathbb{P}[+|V] \hspace{2mm}+ \sum_{\stackrel{V \subset \{1,2,3,4\}}{\textnormal{s.t. }\#V = 1}}\hspace{-2mm}\mathbb{P}[+|V]
\end{eqnarray}
Indeed, the left hand side evaluates to $256/d^4$, while the right hand side evaluates to the following expression (again by standard combinatorial arguments):
\begin{equation}
	\frac{1}{d^4}\Big[ {4\choose{3}}3^4 - {4\choose{2}}2^4 + {4\choose{1}}1^4 \Big] = \frac{1}{d^4}232 \neq \frac{1}{d^4}256
\end{equation}
The difference between left and right hand sides is $24/d^4$, which is exactly the contribution $\frac{1}{d^4}{4\choose{4}}\cdot4!$ of the shape requiring 4 distinct values (appearing in $P_{\{1,2,3,4\}}$ but not in any of the sub-projectors).

For $d \geq 5$, however, we observe absence of fifth-order (or higher-order) interference, witnessed (by definition) by the following equality:
\begin{eqnarray}
	\mathbb{P}[+|\{1,2,3,4,5\}] &=
		\sum\limits_{\stackrel{V \subset \{1,2,3,4,5\}}{\textnormal{s.t. }\#V = 4}}\hspace{-2mm}\mathbb{P}[+|V] \hspace{2mm}- \sum\limits_{\stackrel{V \subset \{1,2,3,4,5\}}{\textnormal{s.t. }\#V = 3}}\hspace{-2mm}\mathbb{P}[+|V] \nonumber\\
		&+ \sum\limits_{\stackrel{V \subset \{1,2,3,4,5\}}{\textnormal{s.t. }\#V = 2}}\hspace{-2mm}\mathbb{P}[+|V] \hspace{2mm}- \sum\limits_{\stackrel{V \subset \{1,2,3,4,5\}}{\textnormal{s.t. }\#V = 1}}\hspace{-2mm}\mathbb{P}[+|V]
\end{eqnarray}
Indeed, the left hand side evaluates to $625/d^4$, and the right hand side yields the same:
\begin{equation}
	\frac{1}{d^4}\Big[ {5\choose{4}}4^4 - {5\choose{3}}3^4 + {5\choose{2}}2^4 - {5\choose{1}}1^4 \Big] = \frac{1}{d^4}625
\end{equation}


\section{Conclusions}
\label{section_conclusions}

In this work, we used an iterated CPM construction known as double-dilation to construct a full-fledged probabilistic theory of density hypercubes, possessing hyper-decoherence maps and showing higher-order interference effects. We have defined all the necessary categorical structures. We have gone over the mathematical detail of the (hyper-)decoherence–induced relationship between our new theory, quantum theory and classical theory. We have imported diagrammatic reasoning from the familiar setting of mixed-state quantum theory. We have developed a graphical formalism to study the internal component symmetries of states and processes. Finally, we have shown that the theory displays interference effects of orders up to four, but not of orders five and above. 

A number of questions are left open and will be answered as part of future work. Firstly, we endeavour to carry out a more physically-oriented analysis of the theory, including a study of the structure of normalised states and effects and a characterisation of the normalised reversible transformations. Secondly, we need to investigate the physical significance and implications of sub-normalisation of the hyper-decoherence maps, and construct a suitable extension of our theory where said maps become normalised. Finally, we intend to look at concrete implementations of certain protocols in our theory, such as those previously studied \cite{Lee-Selby-Grover,Lee-Selby-interference} in the context of higher-order interference. 

From a categorical standpoint, we also wish to further understand the specific roles played by double-mixing and double-dilation in our theory. At present, we know that the former is enough for density hypercubes to show higher-order interference and decohere to classical systems, but the latter seems to be necessary for quantum systems to arise by hyper-decoherence. Further investigation will hopefully shed more light on the individual contributions of the two constructions. Finally, we endeavour to investigate the generalisation of our results to higher iterated dilation, and more generally to higher-order CPM constructions \cite{higherOrderCPM} (with finite abelian symmetry groups other than the $\integersMod{2}^N$ groups arising from iterated dilation).

\newpage
\bibliographystyle{eptcs}
\bibliography{bibliography}

\appendix

\section{Proofs}
\label{appendix_proofs}

\setcounter{theorem_c}{\value{proposition_causality_c}}
\begin{proposition}
\label{proposition_causality}
	The process theory $\DDCategory{\fHilbCategory}$ is causal, in the following sense: for every object $\DDCategory{H}$, the only effect $\DDCategory{H} \rightarrow \reals^+$ in \DDCategory{\fHilbCategory} which yields the scalar $1$ on all normalised states of $\DDCategory{H}$ is the ``forest'' discarding map of density hypercubes $\trace{\,\,\DDCategory{H}}$. 
\end{proposition}
\begin{proof}
	Seen as an effect in \CPMCategory{\fHilbCategory}, any such effect must take the form of a sum $\sum_{x \in X} p_x \ket{a_x}\bra{a_x}$, where $p_x \in \reals^+$ and $(\ket{a_x})_{x \in X}$ is an orthonormal basis for ${H} \otimes {H}$ which satisfies an additional condition due to the symmetry requirement for effects in \DDCategory{\fHilbCategory}. If we write $\sigma_{{H},{H}}$ for the symmetry isomorphism ${H} \otimes {H} \rightarrow {H} \otimes {H}$ which swaps two copies of ${H}$ in $\fHilbCategory$, the additional condition on the orthonormal basis implies that for each $x \in X$ there is a unique $y \in X$ such that $\sigma_{{H},{H}} \ket{a_x} = e^{i\theta_{x}}\ket{a_y}$ and $p_x = p_y$; we define an involutive bijection $s: X \rightarrow X$ by setting $s(x)$ to be that unique $y$. For each $x \in X$, consider the normalised state $\rho_x := \frac{1}{2}(\ket{a_x}\bra{a_x} + \ket{a_{s(x)}}\bra{a_{s(x)}})$ in \CPMCategory{\fHilbCategory}, which we can realise in the sub-category \DDCategory{\fHilbCategory} by considering the classical structure $\hbox{\input{symbols/ZdotSym.tex}}\!$ on $\complexs^2$ corresponding to orthonormal basis $\ket{0}, \ket{1}$ and the vector $\ket{r_x} := \frac{1}{\sqrt[4]{2}}(\ket{a_x} \otimes \ket{0} + \ket{a_{s(x)}} \otimes \ket{1})$:
	\begin{equation}
		 \rho_x 
		\hspace{2mm} = \hspace{2mm}
		\frac{1}{2}
		\Bigg(
		\hspace{2mm}
 		\scalebox{0.8}{$
			\input{pictures/causalityProofEqn1l.tikz}
		$}
		\hspace{2mm}
		\Bigg)
		\hspace{3mm} = \hspace{3mm}
 		\scalebox{0.8}{$
			\input{pictures/causalityProofEqn1r.tikz}
		$}
	\end{equation}
	Now observe that the requirement that our effect yield $1$ on all normalised states implies, in particular, that the following equation must hold:
	\begin{equation}
		1
		\hspace{3mm} = \hspace{3mm}
		\scalebox{0.6}{$
			\input{pictures/causalityProofEqn2.tikz}
		$}
		\hspace{3mm} = \hspace{3mm}
		\frac{1}{2}(p_x+p_{s(x)})
		\hspace{3mm} = \hspace{3mm}
		p_x
	\end{equation}
	As a consequence, our effect is written $\sum_{x \in X} \ket{a_x}\bra{a_x}$, which is exactly the ``forest'' discarding map $\trace{\,\,\DDCategory{H}}$ of density hypercubes on $\DDCategory{H}$.
\end{proof}

\setcounter{theorem_c}{\value{proposition_classical_c}}
\begin{proposition}
\label{proposition_classical}
	Let $\KaroubiEnvelope{\DDCategory{\fHilbCategory}}$ be the Karoubi envelope of \DDCategory{\fHilbCategory}, and write $\KaroubiEnvelope{\DDCategory{\fHilbCategory}}_K$ for the full subcategory of $\KaroubiEnvelope{\DDCategory{\fHilbCategory}}$ spanned by objects in the form $(\DDCategory{H},\decoh{\hbox{\input{symbols/ZdotSym.tex}}\!})$. There is an $\reals^+$-linear monoidal equivalence of categories between $\KaroubiEnvelope{\DDCategory{\fHilbCategory}}_K$ and the probabilistic theory $\RMatCategory{\reals^+}$ of classical systems. Furthermore, classical stochastic maps correspond to the maps in $\KaroubiEnvelope{\DDCategory{\fHilbCategory}}_K$ normalised with respect to the discarding maps $\trace{\,\,(\DDCategory{H},\decoh{\hbox{\input{symbols/ZdotSym.tex}}\!})} := \trace{\,\,\DDCategory{H}} \circ \decoh{\hbox{\input{symbols/ZdotSym.tex}}\!}$, which we can write explicitly as follows:
	\newcounter{proposition_classical_c_eq_aux}
	\setcounter{proposition_classical_c_eq_aux}{\value{equation}}
	\setcounter{equation}{\value{proposition_classical_c_eq}}
	\begin{equation}
		\trace{\,\,(\DDCategory{H},\decoh{\hbox{\input{symbols/ZdotSym.tex}}\!})}
		\hspace{3mm} := \hspace{3mm}
		\scalebox{0.7}{$
			\input{pictures/DHclassicalDiscardingMaps1.tikz}
		$}
		\hspace{3mm} = \hspace{3mm}
		\scalebox{0.7}{$
			\input{pictures/DHclassicalDiscardingMaps2.tikz}
		$}
	\end{equation}
	\setcounter{equation}{\value{proposition_classical_c_eq_aux}}
\end{proposition}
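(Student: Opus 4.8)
The plan is to exhibit the equivalence by splitting the decoherence idempotent into a sum of mutually orthogonal rank-one idempotents, one for each basis vector, thereby realising each object $(\DDCategory{H},\decoh{\ZdotSym})$ as a biproduct of copies of a single generating ``classical point''. Since $\RMatCategory{\reals^+}$ is precisely the category of finite biproducts of one object whose endomorphism semiring is $\reals^+$, the equivalence will follow once we identify that generator and compute its hom-semiring. Writing $(\ket{\psi_x})_{x\in X}$ for the orthonormal basis associated with $\ZdotSym$ and $\Psi_x$ for the corresponding doubled (rank-one) state, I would first read off from the displayed expansion of $\decoh{\ZdotSym}$ that $\decoh{\ZdotSym} = \sum_{x\in X} e_x$, where $e_x := (\overline{\Psi_x}\otimes\Psi_x)\circ(\overline{\Psi_x}^\dagger\otimes\Psi_x^\dagger)$ is the term indexed by $x$; each $e_x$ is manifestly a doubled map, hence a genuine endomorphism in $\DDCategory{\fHilbCategory}$.

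First I would verify that the $e_x$ are orthogonal idempotents. Composing two such terms leaves a central scalar $(\overline{\Psi_x}^\dagger\circ\overline{\Psi_{x'}})\otimes(\Psi_x^\dagger\circ\Psi_{x'})$, equal to $|\braket{\psi_x}{\psi_{x'}}|^{2}\cdot|\braket{\psi_x}{\psi_{x'}}|^{2} = \delta_{x,x'}$ by orthonormality, so that $e_x\circ e_{x'} = \delta_{x,x'}\,e_x$. In the idempotent-complete, $\reals^+$-enriched category $\KaroubiEnvelope{\DDCategory{\fHilbCategory}}$ this gives a biproduct decomposition $(\DDCategory{H},\decoh{\ZdotSym})\cong\bigoplus_{x\in X}(\DDCategory{H},e_x)$. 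I would then show that each summand $(\DDCategory{H},e_x)$ is isomorphic to the splitting of the tensor unit — the classical point — and that $\Hom{\KaroubiEnvelope{\DDCategory{\fHilbCategory}}}{(\DDCategory{H},e_x)}{(\DDCategory{K},e_y)}\cong\reals^+$: any $g$ with $g = e_y\circ g\circ e_x$ factors through the one-dimensional corner cut out by the rank-one projectors $e_x,e_y$, hence is a scalar multiple of the evident matrix unit, the scalar ranging over the scalars of $\DDCategory{\fHilbCategory}$, namely $\reals^+$. Defining the functor on objects by $(\DDCategory{H},\decoh{\ZdotSym})\mapsto X$ and on morphisms by the matrix $M_{y,x}$ of corner scalars, essential surjectivity is immediate (take $H\isom\complexs^n$ with its standard basis), fullness and faithfulness follow from the corner computation, and $\reals^+$-linearity is built in. Monoidality reduces to the identity that the decoherence map of the product classical structure on $H\otimes K$ is the tensor of the two decoherence maps; under this the biproduct decompositions tensor together, so index sets multiply and morphism matrices combine by Kronecker product, matching the monoidal structure of $\RMatCategory{\reals^+}$.

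For the second claim I would compute the discarding effect $\trace{\,\,(\DDCategory{H},\decoh{\ZdotSym})} = \trace{\,\,\DDCategory{H}}\circ\decoh{\ZdotSym} = \sum_{x\in X}\trace{\,\,\DDCategory{H}}\circ e_x$, which in the identification above is the covector summing the matrix entries; a Karoubi morphism is then normalised with respect to these discarding maps exactly when each column of $M$ sums to $1$, i.e.\ when $M$ is stochastic, so normalised maps correspond precisely to classical stochastic maps. The main obstacle I anticipate is not the (clean) orthogonality calculation but pinning down the corner hom-semiring inside the constrained double-dilation category: one must check that every $\reals^+$-multiple of a matrix unit is genuinely realised by a double-dilation morphism and that nothing further survives the sandwiching by $e_x$ and $e_y$. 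This reduces to the rank-one structure of the $e_x$ together with the already-established fact that the scalars of $\DDCategory{\fHilbCategory}$ are exactly $\reals^+$, but it is the step demanding the most care with the specific form of morphisms in $\DDCategory{\fHilbCategory}$.
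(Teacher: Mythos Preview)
Your proposal is correct and lands on the same identification as the paper, but the organisation is different. The paper proceeds by direct expansion: it writes a generic morphism $(\DDCategory{H},\decoh{\XbwdotSym})\to(\DDCategory{K},\decoh{\YbwdotSym})$ as a decohered double-dilation map, inserts the basis expansion of both decoherence maps, and reads off that what remains is exactly a matrix $(M_{xy})$ of non-negative reals, with matrix multiplication, Kronecker product and addition matching composition, tensor and the $\reals^+$-linear structure; the discarding computation is then a one-liner. You instead package the same basis expansion as an orthogonal-idempotent decomposition $\decoh{\ZdotSym}=\sum_x e_x$, pass to a biproduct splitting in the Karoubi envelope, and recover $\RMatCategory{\reals^+}$ as the category of matrices over the endomorphism semiring of the generating classical point. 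Your route is more structural and makes the role of the semiring of scalars explicit, at the cost of having to justify the biproduct decomposition and the corner hom computation; the paper's route is a bare-hands calculation that bypasses those abstractions. The step you flag as delicate---that every $\reals^+$-multiple of a matrix unit is realised by a genuine double-dilation morphism---is exactly the fullness check, and it is handled in both approaches by the fact that doubled rank-one states/effects live in $\DDCategory{\fHilbCategory}$ and that its scalars are $\reals^+$.
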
 
\begin{proof}
Consider two objects $(\DDCategory{H},\decoh{\hbox{\input{symbols/DdotSym.tex}}\!\!})$ and $(\DDCategory{K},\decoh{\hbox{\input{symbols/YbwdotSym.tex}}\!\!})$, where $\hbox{\input{symbols/DdotSym.tex}}\!\!$ and $\hbox{\input{symbols/YbwdotSym.tex}}\!\!$ are special commutative $\dagger$-Frobenius algebras associated with orthonormal bases $(\ket{\psi_x})_{x \in X}$ and $(\ket{\phi_y})_{y \in Y}$ of ${H}$ and ${K}$ respectively. The morphisms $(\DDCategory{H},\decoh{\hbox{\input{symbols/DdotSym.tex}}\!\!}) \rightarrow (\DDCategory{K},\decoh{\hbox{\input{symbols/YbwdotSym.tex}}\!\!})$ in $\KaroubiEnvelope{\DDCategory{\fHilbCategory}}$ are exactly the maps of density hypercubes $\DDCategory{H} \rightarrow \DDCategory{K}$ in the following form:
\begin{equation}
	\scalebox{0.6}{$
		\input{pictures/DHclassicalMaps.tikz}
	$}
\end{equation}
We can expand the definition of decoherence maps to see that these morphisms correspond to generic matrices $M_{xy}$ of non-negative real numbers, with matrix composition as sequential composition, Kronecker product as tensor product, and the $\reals^+$-linear structure of matrix addition. 
\begin{equation}
	\scalebox{0.6}{$
		\input{pictures/DHclassicalMaps.tikz}
	$}
	\hspace{3mm} = \hspace{3mm}
	\sum_{x \in X} \sum_{y \in Y} \hspace{3mm}
	\scalebox{0.6}{$
		\input{pictures/DHclassicalMapsExplicit.tikz}
	$}
\end{equation}
The discarding maps obtained by decoherence of the environment structure for \DDCategory{\fHilbCategory} yield the usual environment structure for classical systems:
\begin{equation}
	\trace{\,(\mathcal{H},\decoh{\hbox{\input{symbols/ZdotSym.tex}}\!})}
	\hspace{3mm} := \hspace{3mm}
	\scalebox{0.7}{$
		\input{pictures/DHclassicalDiscardingMaps1.tikz}
	$}
	\hspace{3mm} = \hspace{3mm}
	\scalebox{0.7}{$
		\input{pictures/DHclassicalDiscardingMaps2.tikz}
	$}
	\hspace{3mm} = \hspace{3mm}
	\sum_{x \in X} \hspace{3mm}
	\scalebox{0.7}{$
		\input{pictures/DHclassicalDiscardingMaps3.tikz}
	$}
\end{equation}
Hence $\mathcal{C}_K$ is equivalent to the probabilistic theory $\RMatCategory{\reals^+}$ of classical systems. 
\end{proof}

\setcounter{theorem_c}{\value{proposition_quantum_c}}
\begin{proposition}
\label{proposition_quantum}
	Let $\KaroubiEnvelope{\DDCategory{\fHilbCategory}}$ be the Karoubi envelope of \DDCategory{\fHilbCategory}, and write $\KaroubiEnvelope{\DDCategory{\fHilbCategory}}_Q$ for the full subcategory of $\KaroubiEnvelope{\DDCategory{\fHilbCategory}}$ spanned by objects in the form $(\DDCategory{H},\hypdecoh{\hbox{\input{symbols/ZdotSym.tex}}\!})$. There is an $\reals^+$-linear monoidal equivalence of categories between $\KaroubiEnvelope{\DDCategory{\fHilbCategory}}_Q$ and the probabilistic theory $\CPMCategory{\fHilbCategory}$ of quantum systems and CP maps between them. Furthermore, trace-preserving CP maps correspond to the maps in $\KaroubiEnvelope{\DDCategory{\fHilbCategory}}_Q$ normalised with respect to the discarding maps $\trace{\,\,(\DDCategory{H},\hypdecoh{\hbox{\input{symbols/ZdotSym.tex}}\!})} := \trace{\,\,\DDCategory{H}} \circ \hypdecoh{\hbox{\input{symbols/ZdotSym.tex}}\!}$, which we can write explicitly as follows:
	\newcounter{proposition_quantum_c_eq_aux}
	\setcounter{proposition_quantum_c_eq_aux}{\value{equation}}
	\setcounter{equation}{\value{proposition_quantum_c_eq}}
	\begin{equation}
		\trace{\,\,(\DDCategory{H},\hypdecoh{\hbox{\input{symbols/ZdotSym.tex}}\!})}
		\hspace{3mm} := \hspace{3mm}
		\scalebox{0.7}{$
			\input{pictures/DHquantumDiscardingMaps1.tikz}
		$}
		\hspace{3mm} = \hspace{3mm}
		\scalebox{0.7}{$
			\input{pictures/DHclassicalDiscardingMaps2.tikz}
		$}
	\end{equation}
	\setcounter{equation}{\value{proposition_quantum_c_eq_aux}}
\end{proposition} 
\begin{proof}
We can define an essentially surjective, faithful monoidal functor from \KaroubiEnvelope{\DDCategory{\fHilbCategory}} to the category \CPMCategory{\fHilbCategory} of quantum systems and CP maps by setting $(\DDCategory{H},\hypdecoh{\hbox{\input{symbols/DdotSym.tex}}\!\!}) \mapsto \mathcal{H}$ on objects and doing the following on morphisms:
\begin{equation}
	\scalebox{0.6}{$
		\input{pictures/DHQuantumSysSubcat1.tikz}
	$}
	\hspace{3mm} \mapsto \hspace{3mm}
	\scalebox{0.6}{$
		\input{pictures/DHQuantumSysSubcat2.tikz}
	$}
\end{equation}
In order to show monoidal equivalence we need to show that the functor is also full, i.e. that every CP map can be obtained from a map of \KaroubiEnvelope{\DDCategory{\fHilbCategory}} in this way. Because of compact closure, it is actually enough to show that all states can be obtained this way. Consider a finite-dimensional Hilbert space $H$ and a classical structure $\hbox{\input{symbols/YbwdotSym.tex}}\!\!$ on it, and write $(\ket{\psi_x})_{x \in X}$ for the orthonormal basis of $H$ associated to $\hbox{\input{symbols/YbwdotSym.tex}}\!\!$. The most generic mixed quantum state on $\mathcal{H}$ takes the form $\rho = \sum_{y \in Y} p_y \ket{\gamma_y}\bra{\gamma_y}$, where $(\ket{\gamma_y})_{y \in Y}$ is some orthonormal basis of ${H}$ and $p_y \in \reals^+$. Let $\hbox{\input{symbols/ZdotSym.tex}}\!$ be the classical structure associated with the orthonormal basis $(\ket{\gamma_y})_{y \in Y}$, and define the states $\ket{\sqrt[\hbox{\input{symbols/YbwdotSym.tex}}\!\!]{\gamma_y}}:= \sum_{x \in X} \ket{\psi_x} \sqrt{\braket{\psi_x}{\gamma_y}}$, where $\sqrt{\braket{\psi_x}{\gamma_y}} \in \complexs$ is such that $\sqrt{\braket{\psi_x}{\gamma_y}}^2 = \braket{\psi_x}{\gamma_y} \in \complexs$. If we write $\ket{\phi} := \sum_{yY} \sqrt{p_y} \ket{\sqrt[\hbox{\input{symbols/YbwdotSym.tex}}\!\!]{\gamma_y}} \otimes \ket{\gamma_y}$, then the desired state $\rho$ can be obtained as follows:
\begin{equation}
	\rho
	\hspace{1mm} = \hspace{1mm}
	\sum_{y \in Y} \hspace{1mm} p_y \hspace{2mm}
	\scalebox{0.8}{$
		\begin{tikzpicture}
	\begin{pgfonlayer}{nodelayer}
		\node[style=state] (psi) at (0,0) {$\Gamma_y$};
		\node (out) at (2,0) {};
	\end{pgfonlayer}
	\begin{pgfonlayer}{edgelayer}
		\draw[-] (psi) to (out);	
	\end{pgfonlayer}
\end{tikzpicture}
	$}
	\hspace{1mm} =\hspace{3mm}
	\sum_{y \in Y} 
	\hspace{1mm} 
	\raisebox{4mm}{$\sqrt{p_{y}}$}
	\hspace{-7mm}
	\raisebox{-4mm}{$\sqrt{p_{y}}$}
	\hspace{2mm}
	\scalebox{0.8}{$
		\input{pictures/fullOnStates2.tikz}
	$}
	\hspace{2mm} =\hspace{3mm}
	\scalebox{0.8}{$
		\input{pictures/fullOnStates3.tikz}
	$}
\end{equation}	
Hence the monoidal functor defined above is full, faithful and essentially surjective, i.e. an equivalence of categories. Furthermore, it is $\reals^+$-linear and it respects discarding maps.
\end{proof}

\section{Possibility of extension for the theory of density hypercubes}
\label{appendix_extension}

The theory of density hypercubes presented in this work is fully-fledged\footnote{In the sense that it contains all the features necessary to consistently talk about operational scenarios, such as preparations, measurements, controlled transformations, reversible transformation, test, non-locality scenarios, etc.} but incomplete: as shown by Equation \ref{proposition_quantum_eq_label}, the hyper-decoherence maps are not normalised (i.e. they are not ``deterministic'', in the parlance of OPTs/GPTs)
	\setcounter{proposition_quantum_c_eq_aux}{\value{equation}}
	\setcounter{equation}{\value{proposition_quantum_c_eq}}
	\begin{equation}
		\trace{\,\,(\DDCategory{H},\hypdecoh{\hbox{\input{symbols/ZdotSym.tex}}\!})}
		\hspace{3mm} := \hspace{3mm}
		\scalebox{0.7}{$
			\input{pictures/DHquantumDiscardingMaps1.tikz}
		$}
		\hspace{3mm} = \hspace{3mm}
		\scalebox{0.7}{$
			\input{pictures/DHclassicalDiscardingMaps2.tikz}
		$}
	\end{equation}
	\setcounter{equation}{\value{proposition_quantum_c_eq_aux}}
When it comes to this work, however, this is not much of a problem: all we need to show is that an extension of our theory can exists in which the ``tree-on-a-bridge'' effect above can be completed to the discarding map, and our results---both hyper-decoherence to quantum theory and higher-order interference---will automatically apply to any such extension.

Let $(\ket{\psi_x})_{x \in X}$ be the orthonormal basis associated with the special commutative $\dagger$-Frobenius algebra $\hbox{\input{symbols/ZbwdotSym.tex}}\!\!$. The effect needed to complete $\trace{\,\,(\DDCategory{H},\hypdecoh{\hbox{\input{symbols/ZdotSym.tex}}\!})}$ to the discarding map $\trace{\,\,\DDCategory{H}}$ is itself an effect in $\CPMCategory{\fdHilbCategory}$, which can be written explicitly as follows:
\begin{equation}
	\label{eq_extensioneffect}
	\scalebox{0.7}{$\input{pictures/DHforestdiscardingmap.tikz}$}
	\hspace{3mm} - \hspace{3mm}
	\scalebox{0.7}{$
		\input{pictures/DHclassicalDiscardingMaps2.tikz}
	$}
	\hspace{3mm} = \hspace{3mm}
	\sum_{\substack{x,y \in X \\ \text{ s.t. } x \neq y}} \hspace{3mm}
	\scalebox{0.7}{$
		\input{pictures/DHextensioneffect.tikz}
	$}
\end{equation}
Because it is an effect in $\CPMCategory{\fdHilbCategory}$, which has $\reals^+$ as its semiring of scalars, it is in particular non-negative on all states in $\DDCategory{\fdHilbCategory}$, showing that: (i) hyper-decoherence maps are sub-normalised; (ii) our theory does not satisfy the no-restriction condition; (iii) an extension to a theory with normalised hyper-decoherence is possible. This shows that our results on hyper-decoherence have physical significance.
Furthermore, let $\ket{1},...,\ket{d}$ be an orthonormal basis of $\complexs^d$, and let $\hbox{\input{symbols/ZbwdotSym.tex}}\!\!$ correspond to the Fourier basis for the finite abelian group $\integersMod{d}$:
\begin{equation}
	\bigg(\frac{1}{\sqrt{d}}\sum_{j=1}^d e^{i\frac{2\pi}{d}jk} \ket{j}\bigg)_{k=1,...,d}
\end{equation} 
Choosing $k:=d$, in particular, shows that the orthonormal basis above contains the state $\frac{1}{\sqrt{d}}\ket{\psi_+}$ used in Section \ref{section_higherOrderInterference}. Then the effect defined in Equation \ref{eq_extensioneffect} also shows that the computation of $\mathbb{P}[+|U]$ in Section \ref{section_higherOrderInterference} can be done as part of a bona-fide measurement in any such extended theory, and hence that our higher-order interference result has physical significance.

\end{document}